\documentclass{ifacconf}

\usepackage[utf8]{inputenc}
\usepackage{natbib}
\usepackage{amsmath,amssymb,amsfonts}
\usepackage{mathtools}
\usepackage{enumerate}
\usepackage{algorithmic}
\usepackage{graphicx}
\usepackage{textcomp}
\usepackage{xcolor}
\usepackage{multirow}
\usepackage{cuted}
\usepackage{xstring}
\usepackage{microtype}
\usepackage{enumitem}

\graphicspath{{./fig/} {./programs/} {../fig/}}

\providecommand{\proofname}{Proof}
\newcommand\qedsymbol{\hfill$\blacksquare$}

\makeatletter
\renewenvironment{pf}[1][\proofname]{\par
  \normalfont \topsep6\p@\@plus6\p@\relax
  \trivlist
  \item[\hskip\labelsep
        \itshape
    #1\@addpunct{.}]\ignorespaces
}{%
  \nolinebreak\qedsymbol\endtrivlist\@endpefalse
}
\makeatother

\makeatletter
\newcommand{\pushright}[1]{\ifmeasuring@#1\else\omit\hfill$\displaystyle{#1}$\fi\ignorespaces}
\newcommand{\pushleft}[1]{\ifmeasuring@#1\else\omit$\displaystyle{#1}$\hfill\fi\ignorespaces}
\makeatother

\newcommand{\emp}[1]{\ensuremath{\text{EMP}_{#1}}}

\definecolor{magenta}{cmyk}{0.1, 1, 0, 0}
\definecolor{greenedu}{cmyk}{1, 0, 1, 0.1}
\definecolor{cyanedu}{cmyk}{1, 0, 0, 0.1}
\definecolor{brown}{cmyk}{0.2, 0.63, 0.84, 0.2}
\definecolor{BZcolor}{cmyk}{0, 0.35, 1, 0}

\definecolor{DOR}{RGB}{255, 140, 0}
\definecolor{SOR}{RGB}{255, 79, 0}
\definecolor{EOR}{RGB}{186, 22, 12}

\providecommand{\comjh[1]}{\red{\com{JH}{#1}}}
\providecommand{\com[2]}{\begin{tt}[#1: #2]\end{tt}}
\definecolor{red}{rgb}{1,0,0}

\definecolor{blu}{rgb}{0,0,1}

\definecolor{gre}{rgb}{0,0.7,0.3}

\newtheorem{theorem}{Theorem}[section]
\newtheorem{proposition}{Proposition}[section]
\newtheorem{lemma}{Lemma}[section]

\newtheorem{definition}{Definition}[section]

\begin{document}
\begin{frontmatter}
	
	\title{Optimal excitation and measurement patterns for networks with tree topology}
	
	\thanks[footnoteinfo]{{This work was supported by Conselho Nacional de Desenvolvimento de Científico e Tecnológico (CNPq)} through personal grants to both authors.}
	
	\author[First]{Eduardo Mapurunga} 
	\author[Second]{Alexandre Sanfelici Bazanella} 
	
	\address[First]{Universidade Federal do Ceará (UFC) - Itapajé-CE, Brazil.
	email: mapurunga@ufc.br}
	\address[Second]{Data-Driven Control Group - Department of Automation and Energy, Universidade Federal do Rio Grande do Sul, Porto Alegre-RS, Brazil.
	email: bazanella@ufrgs.br}
	
	\begin{abstract}                
				  In this work we evaluate the excitation and measurement patterns (EMP) for networks with tree topology. 	
					We investigate guidelines for the selection of the minimal EMPs, i.e. those with the least number of excited and measured nodes combined, for which 
					the accuracy obtained, in terms of the trace of the asymptotic covariance matrix, is optimal.
					We introduce the concept of partial information matrix as a means to systematically obtain the information matrix for any dynamic network. 
					For a specific tree class, called cross, we show that the accuracy of a particular module depends on the magnitude of the parameters to be estimated. 
					Furthermore, when all factors are equal, it is best to excite. 
					We extend a topological condition for branches under which the accuracy of a particular module of the network is independent of the other parameters from the tree. 
					We provide a numerical analysis showing that our guidelines could be used as a selection tool for minimal EMPs for tree networks. 
	\end{abstract}
	
	\begin{keyword}
		Dynamic Networks, Variance Analysis, Excitation and Measurement Patterns.
	\end{keyword}
	
\end{frontmatter}

\section{Introduction}

Dynamic networks have been the focus of much research attention in the last decade
\citep{VanDenHof2013ComplexPem, BazanellaEtAlNetworkIdentificationPartial2019a, WeertsEtAlPredictionErrorIdentification2018,GeversEtAlPracticalMethodConsistent2018}.
A dynamic network can be represented by a digraph $\mathcal{G}$ whose nodes represent signals of the network and edges represent the dynamic relationship between the signals, usually represented by transfer functions.  
They provide a framework to analyze several scientific problems, such as opinion consensus in social networks, communication systems, regulatory gene networks, ecological system 
\citep{anderson-dabbene-proskurnikov-etal-dynamical-2020, bullo-lectures-2018, mesbahi2010graph, chang-dobbe-bhushan-etal-reconstruction-2016}, just to name a few. 

Network identification concerns obtaining transfer function models (modules) from network signals.
Prior to the identification of the models, one must guarantee that the modules of the network are uniquely identifiable. 
Several works have focused on the identifiability of the network modules.
In \cite{MapurungaBazanellaOptimalAllocationExcitation2021} it was introduced the concept of Excitation and Measurement Pattern (EMP), that is how the excitation signals and measurement should be allocated in a network, 
inspired by \cite{BazanellaEtAlNetworkIdentificationPartial2019a}. 
It turns out that if one wants to guarantee identifiability of a network, one could tackle the problem by designing EMPs for that purpose. 
An EMP is said to be {\em minimal} if it guarantees generic identifiability and it uses the minimal number of excitation and measurements combined. 
If one wants to go one step further, one could also think on how to select an EMP that achieves the most accurate estimates.  
This problem was originally formulated for state-space branches and cycles in \cite{MapurungaBazanellaOptimalAllocationExcitation2021}. 
In the case of branches, 
guidelines for the selection of the most accurate EMP was provided for branches with transfer functions as modules in \cite{mapurunga-bazanella-optimal-excitation-2023}.
In a nutshell, one must choose to excite nodes near the source and measure nodes near the sink, see Section \ref{sec:Dynetsetting} for formal definitions.
It was shown in \cite{MapurungaBazanellaOptimalAllocationExcitation2021,mapurunga-bazanella-optimal-excitation-2023} that the difference in accuracy between the most accurate EMP and the least accurate one is often of several orders of magnitude, which testifies to the relevance of the problem at hand.  

In this work, we tackle the same problem but for a more complex network topology, namely trees. 
We investigate what are the principles of designing minimal EMPs for trees such that the trace of the asymptotic covariance matrix is the least among all minimal EMPs. 
We start by providing the common framework for the analysis of trees in Section \ref{sec:Dynetsetting} and network identification in Section \ref{sec:Problem}. 
Our first contribution is to provide a systematic procedure to compute the information matrix for any dynamic network, this is the matter of Section \ref{sec:infmat}. 
We start the analysis by comparing whether we should excite or measure an internal node from a specific tree called cross and its dual network, this analysis is presented in Section \ref{sec:exciteormeasure}.
We extend a result about the accuracy of a particular module from \cite{mapurunga-bazanella-optimal-excitation-2023} to trees, provided in Section \ref{sec:justSNR}.
Based on the results obtained so far we provide a guideline for the selection of the most accurate minimal EMPs and present a numeric analysis to validate our insights in Section \ref{sec:21trees}.
Finally, we present our conclusions in Section \ref{sec:conclusions}.

\section{Dynamic Networks Setting}
\label{sec:Dynetsetting}

We consider dynamic networks composed of $n$ nodes which represent scalar internal signals.  
The nodes may have the influence of external signals 
and
the internal signals are accessible through noisy measurements. 
Such dynamic networks are  described by the network equations:
\begin{subequations}
	\begin{align}
	w(t) &= A^0w(t-1) + Br(t), \label{eq:S_internal}\\
	y(t) &= Cw(t) + e(t), \label{eq:S_external}
	\end{align}
\end{subequations}
where 
$y(t) \in \mathbb{R}^p$ is the network output,
$w(t) \in \mathbb{R}^n$, $r(t) \in \mathbb{R}^m $, $e(t) \in \mathbb{R}^p$ are the network's internal and external signals and the measurement noise, respectively.
The matrix $A^0 \in \mathbb{R}^{n \times n}$ is referred to as network matrix, whose entries are the parameters of the network. 
The forward shift operator is denoted as $q$, i.e. $qr(t) = r(t+1)$.
We refer to (\ref{eq:S_internal}) as the network equation and to  (\ref{eq:S_external}) as the measurement equation.
The matrices $B \in \mathbb{Z}_2^{n \times m}$ and $C \in \mathbb{Z}_2^{p \times n}$ are binary selection matrices with full column rank and full row rank, respectively. 
They select which external signals are applied to the nodes and which measurements are taken from the network.
Let $\mathcal{V}$ be the set of all nodes, and $\mathcal{B}$ and $\mathcal{C}$ the sets of excited and measured nodes, respectively.
These two sets define the EMP, formally introduced in \cite{MapurungaBazanellaOptimalAllocationExcitation2021}.
\begin{definition}
    A pair of selection matrices $B$ and $C$, with its corresponding pair of node sets $\mathcal{B}$ and $\mathcal{C}$,  defines  an \textbf{Excitation and Measurement Pattern (EMP)}.
		An EMP is called \textbf{valid} for the network (\ref{eq:S_internal})-(\ref{eq:S_external}) if this network is generically identifiable 
		\footnote{See \cite{hendrickx-gevers-bazanella-identifiability-2019} for a formal definition.}
		with this EMP.
  Let $\nu = |\mathcal{B}| + |\mathcal{C}|$ \footnote{$|\cdot|$ - Denotes the cardinality of a set.} be the cardinality of an EMP.
  A given EMP is called  \textbf{minimal} for this network if it is valid and there is no other valid EMP with smaller cardinality.
  \label{def:EMP}
\end{definition}
\vspace{-2mm}
%
Associated with the dynamic network there is a digraph $\mathcal{G}(\mathcal{V}, \mathcal{E})$ where $\mathcal{V}$ and $\mathcal{E}$ stand for the set of nodes and edges, respectively. 
Each node $i \in \mathcal{V}$ is associated with an internal signal $w_i(t)$, and each edge $(i, j)$  is associated with a module: $a^0_{ji} q^{-1}$, where $(i, j) \in \mathcal{E}$, if and only if $a^0_{ji} \neq 0$.
For an edge $(i, j)$, the node $i$ is an in-neighbor of node $j$ ($j$ is an out-neighbor of $i$), and we say $(i, j)$ is an outgoing edge of $i$ (ingoing edge of $j$).   
A source (sink) is a node with no in-neighbors (out-neighbors), while nodes that are neither are called \textit{internal}. 
Let $\mathcal{F}$, $\mathcal{S}$, and $\mathcal{I}$ denote the sets of sources, sinks, and internal nodes, respectively. 
A path $\mathcal{P}_{ji}$ from node $i$ to node $j$ is a sequence of edges $((i, k_1), (k_1, k_2), \dots, (k_n, j))$. 
A digraph is weakly connected if a path exists between any pair of nodes. 

In this paper, we are interested in a specific class of networks, namely trees, formally defined in the following.

\begin{definition}
				A tree is any digraph that has no cycles even if one changes the direction of the edges. 
				\label{def:tree}
\end{definition}
\vspace{-2mm}
Trees are a natural extension of branches; 
a branch is just a line (or a path) from a tree. 
A consequence of Definition~\ref{def:tree} is that any tree with $n$ nodes has exactly $n-1$ edges. 
Another consequence is that there is at most one path between any two nodes in a tree. 

In this work, we are interested in obtaining the most accurate minimal EMP for a tree.
The smallest cardinality any EMP can achieve is $n$, see \cite{BazanellaEtAlNetworkIdentificationPartial2019a}. 
For trees, necessary and sufficient conditions for valid (minimal) EMPs are given in the following Proposition.
\begin{proposition}\cite{BazanellaEtAlNetworkIdentificationPartial2019a}
				All transfer functions in a tree network are generically identifiable if only if $\mathcal{F} \subseteq \mathcal{B}, \mathcal{S} \subseteq \mathcal{C}, {\mathcal B} \cup {\mathcal C} = {\mathcal V}$.\label{teo:ExciteOrMeasure}
\end{proposition}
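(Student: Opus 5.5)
We argue with the transfer matrix $T(q) = C(qI - A^0)^{-1}B$ and the fact that in a tree there is at most one path between any two nodes. Then $T_{ji}(q)$, for $j\in\mathcal{C}$ and $i\in\mathcal{B}$, is either zero or the single monomial $q^{-\ell}\prod_{(k,k')\in\mathcal{P}_{ji}} a^0_{k'k}$, where $\ell$ is the path length. Identifiability therefore means recovering every edge weight $a^0_{ji}$ from the set of products of edge weights along the excited-to-measured paths. We prove necessity and sufficiency separately.

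\emph{Necessity.}
\begin{itemize}
\item If a source $s\notin\mathcal{B}$, no signal ever reaches $w_s$. Take an outgoing edge $(s,k)$. Every path from an excited node to a measured node that uses $(s,k)$ must start at $s$, so no such path exists. Hence $a^0_{ks}$ does not appear in $T$ and is not identifiable.
\item If a sink $t\notin\mathcal{C}$, the same argument applies in the dual direction to an ingoing edge $(k,t)$.
\item If some node $v\notin\mathcal{B}\cup\mathcal{C}$, then $v$ is internal by the previous two points. Every path through $v$ enters by an ingoing edge $(u,v)$ and leaves by an outgoing edge $(v,w)$. Because $v$ is neither excited nor measured, $a^0_{vu}$ and $a^0_{wv}$ appear in $T$ only through the product $a^0_{vu}a^0_{wv}$. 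The map $(a^0_{vu},a^0_{wv})\mapsto(\alpha a^0_{vu},a^0_{wv}/\alpha)$ then leaves $T$ unchanged, which breaks generic identifiability.
\end{itemize}

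\emph{Sufficiency.} Order the edges and recover each weight from previously recovered ones. Take an edge $(i,j)$ and use the covering $\mathcal{B}\cup\mathcal{C}=\mathcal{V}$ to split into cases.
\begin{itemize}
\item If $i\in\mathcal{B}$ and $j\in\mathcal{C}$, then $T_{ji}=a^0_{ji}q^{-1}$ directly.
\item If $i\in\mathcal{B}$ but $j\notin\mathcal{C}$, then $j\in\mathcal{B}$ and $j$ is not a sink. Follow outgoing edges from $j$ until reaching a measured node $m$; this is possible because every sink is measured. Then $a^0_{ji}=T_{mi}/T_{mj}$, and $T_{mj}$ is generically nonzero since $j\in\mathcal{B}$.
\item If $i\notin\mathcal{B}$, then $i\in\mathcal{C}$ and $i$ is not a source. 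Go backwards from $i$ to an excited node $b$, which exists because every source is excited. Then pick a measured node $m$ downstream of $j$ (take $m=j$ if $j\in\mathcal{C}$), and obtain $a^0_{ji}=T_{mb}/(T_{ib}\,T_{mj}')$. Here $T_{mj}'$ is the product along the path from $j$ to $m$, which can be obtained in the same way.
\end{itemize}

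The main obstacle is making this recursion well founded. The product $T'_{mj}$ for a non-excited $j$ is not itself an entry of $T$. I would handle it by induction on the distance to the nearest measured node downstream: the edges along a path from $j$ to $m$ lie closer to sinks and are identified first.

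A cleaner alternative is to invoke the general path-based identifiability conditions of \cite{hendrickx-gevers-bazanella-identifiability-2019} or \cite{BazanellaEtAlNetworkIdentificationPartial2019a}. For a tree, their vertex-disjoint path condition reduces exactly to these local requirements, since every pair of nodes is joined by at most one path. I would try the direct ratio construction first and fall back on this citation if the induction becomes messy.
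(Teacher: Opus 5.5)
The paper gives no proof of its own here; it imports the result from \cite{BazanellaEtAlNetworkIdentificationPartial2019a}, so your monomial-and-ratio argument has to stand on its own. It nearly does, but the third necessity step is wrong as written.

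You claim that for $v\notin\mathcal{B}\cup\mathcal{C}$ the weights $a^0_{vu}$ and $a^0_{wv}$ enter $CTB$ only through the product $a^0_{vu}a^0_{wv}$. This fails whenever $v$ has in-degree or out-degree at least two, which is exactly what trees add over branches. Take the paper's cross with $\mathcal{B}=\{1\}$ and $\mathcal{C}=\{3,4,5\}$. The available entries are proportional to $a_{21}a_{32}$, $a_{21}a_{42}$ and $a_{21}a_{52}$. Your map $(a_{21},a_{32})\mapsto(\alpha a_{21},a_{32}/\alpha)$ changes $T_{41}$ and $T_{51}$, so it does not produce an indistinguishable pair.

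The repair is to scale \emph{all} in-edges of $v$ by $\alpha$ and \emph{all} out-edges by $\alpha^{-1}$ at once. No excited-to-measured path starts or ends at $v$, so each such path through $v$ contains exactly one in-edge and one out-edge of $v$. Hence every monomial is unchanged. Equivalently, let $D=\mathrm{diag}(1,\dots,\alpha,\dots,1)$ with $\alpha$ in position $v$. The map $A^0\mapsto DA^0D^{-1}$ preserves the zero pattern, and $CD=C$, $D^{-1}B=B$ give $C(I-DA^0D^{-1}q^{-1})^{-1}B=C(I-A^0q^{-1})^{-1}B$ for every $A^0$. This holds in any network, not only trees.

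Your sufficiency argument is correct, and the obstacle you anticipate does not occur, so no induction is needed. In your third case, if $j\in\mathcal{C}$, take $m=j$. If $j\notin\mathcal{C}$, the covering condition forces $j\in\mathcal{B}$. Then $T_{mj}$ is itself an entry of $CTB$, where $m$ is a measured node reached by following out-edges from $j$. A tree has no directed cycles, so the concatenation $b\to i\to j\to m$ is the unique path from $b$ to $m$. Hence $a^0_{ji}q^{-1}=T_{mb}/(T_{ib}T_{mj})$, and the denominators are nonzero monomials. A ``non-excited $j$'' is always measured, so $T'_{mj}=1$ in that case. Every edge weight is therefore a ratio of at most three entries of $CTB$.
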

\vspace{-2mm}
This result means that all sources from a tree must be excited, all sinks must be measured, and every internal node must be either excited or measured. 
Proposition \ref{teo:ExciteOrMeasure} defines all minimal EMPs for any tree network. 

\section{Network Identification}
\label{sec:Problem}



Designing a minimal EMP is an experiment design problem. 
One objective that should guide this decision is to choose the minimal EMP that yields most accurate estimates.
In this work, we adopt a maximum likelihood approach through the application of the Prediction Error Method (PEM) to the network model (\ref{eq:S_internal})-(\ref{eq:S_external}).
The PEM consists of minimizing the Prediction Error  $\varepsilon (t) := y(t) - \hat{y}(t|t-1, \theta)$, where $\theta$ is vector of parameters of the network, i.e. those from the network matrix $A(\theta)$.
The optimal one-step predictor for it is obtained as follows
\begin{equation}
  \hat{y}(t|t-1, \theta) = C (I - A(\theta))^{-1} B r(t) := C T(\theta) B r(t).\label{eq:optpred}
\end{equation}
Under mild hypothesis, the accuracy of PEM achieves asymptotically the Cramer-Rao lower bound for the covariance matrix. 
Accordingly, the accuracy of the minimal EMPs will be assessed through the Cramer-Rao lower bound.
This matrix can be calculated as
$P = 1 / N[\mathbb{E} \psi(t)\Lambda^{-1}\psi^T(t)]^{-1}$, where $\psi(t)$ is the gradient of the optimal one-step ahead predictor, $\Lambda$ is the noise's covariance matrix, and $N$ is the number of data samples
\citep{ljung1998system}. 
The covariance matrix can also be obtained as the inverse of the information matrix $M := N \mathbb{E} \psi(t) \Lambda^{-1} \psi^T(t)$.
This matrix is essential for assessing the accuracy of the parameter estimates. 
The overall accuracy is assessed by the trace of the matrix $P$, which is known as the A-optimality criterion  \citep{pukelsheimOptimalDesignExperiments2006}.
We consider that the following assumptions hold within the prediction error framework:
\begin{enumerate}[label=(\alph*)]
	\item the dynamic network is stable; 
	\item the signals $\lbrace r_j(t)\rbrace$ are mutually independent stationary white noise processes with zero mean and variance $\sigma_j$; they are uncorrelated with all processes $\left\lbrace e_j\right\rbrace$; \label{item:input-white-noise}
	\item the corrupting noise sequences $\left\lbrace e_j\right\rbrace $ are independent stationary Gaussian white noise processes with zero mean and variance $\lambda_j$.  \label{item:noise-lambda}
\end{enumerate}
Before presenting the analysis for trees, 
let us define some additional nomenclature.
\begin{definition}
	A dynamic network is said to be \emph{uniformly excited} if $\lambda_j = \lambda$ and  $\sigma_i = \sigma$, for all $ j \in \mathcal{C}$ and $i \in \mathcal{B}$.
	It is  a \emph{fully symmetric} network when it is uniformly excited and 
	all modules of the network have the same magnitude.
	\label{def:UE} 
\end{definition}
\vspace{-2mm}
The fully symmetric scenario might be unrealistic at first for a meaningful analysis. 
However, it has been shown in \cite{MapurungaBazanellaOptimalAllocationExcitation2021, mapurunga-bazanella-optimal-excitation-2023} that the insights obtained for this case remain valid also for the more general case. 
Furthermore, this scenario sheds light on the role of the network topology plays into the accuracy of the parameter estimates. 

In the next sections we are going to show how to compute the information matrix for general networks and further we specialize into state space tree networks.

\section{Information Matrix}
\label{sec:infmat}

The information matrix is fundamental for accuracy analysis and experiment design.
For the optimal one-step predictor defined in (\ref{eq:optpred}), the information matrix depends on the gradient of the optimal predictor. 
In order to compute the gradient, let us decompose it as follows:
\begin{align}
\psi(t, \theta) =  \Big{[}\sum_{i \in \mathcal{B}} \psi_{c_1 i}(t, \theta) ~ \quad \cdots \quad
&  \sum_{i \in \mathcal{B}} \psi_{c_p i}(t, \theta)\Big{]} 
\label{eq:psidec}
.\end{align}
%
for $\{c_1~ c_2~ \dots~ c_p\} = \mathcal{C}$, with
$\psi_{ji}(t, \theta) \triangleq \frac{\partial T_{ji}(q, \theta)}{\partial \theta} r_i(t)$  a column-vector, and $T_{ji}(q, \theta)$ the $(j, i)$ element from $T(q, \theta)$. 
In order to isolate the information content from a particular node to another, we are going to introduce the concept of \emph{partial} information matrix corresponding to the input from node $i$ to the output of node $j$. 
%
\begin{definition}
				The partial information matrix associated with the input applied in node $i$ and the measurement at node $j$ is:
				\begin{align}
							M_{ji} = {N}/{\lambda_j} \mathbb{E} \psi_{ji}(t, \theta) \psi_{ji}^T(t, \theta)
							\label{eq:defMji}
				.\end{align}
				\label{def:partialMji}
\end{definition}
\vspace{-4mm}
%
With this definition at hand and from the technical assumptions (b) on the input and  (c) on the external noise, the information matrix can be computed as the sum of the partial information matrices defined by the EMP.
This is formally stated in the following Lemma.
\begin{lemma}
  The information matrix of a dynamic network (\ref{eq:S_internal})-(\ref{eq:S_external}) under assumptions (a)-(c) can be computed as the sum of partial information matrices defined in (\ref{eq:defMji}): 
				\begin{align}
							M = \sum_{j \in \mathcal{C}, i \in \mathcal{B}} M_{ji}
							\label{eq:infmat}
				.\end{align}
				\label{lem:infmat}
\end{lemma}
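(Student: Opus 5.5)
The plan is to start from the definition $M = N\,\mathbb{E}\,\psi(t)\Lambda^{-1}\psi^T(t)$ and expand it using the decomposition (\ref{eq:psidec}). By assumption (c), the noise sequences $e_j$ are mutually independent with variances $\lambda_j$, so $\Lambda = \mathrm{diag}(\lambda_{c_1},\dots,\lambda_{c_p})$. Hence $\Lambda^{-1}$ is diagonal, and the quadratic form splits over the measured nodes:
\begin{align*}
M = N \sum_{j \in \mathcal{C}} \frac{1}{\lambda_j}\,\mathbb{E}\Big[\Big(\sum_{i\in\mathcal{B}}\psi_{ji}(t,\theta)\Big)\Big(\sum_{k\in\mathcal{B}}\psi_{jk}(t,\theta)\Big)^T\Big].
\end{align*}
Here the $j$-th column of $\psi$ is $\sum_{i\in\mathcal{B}}\psi_{ji}$. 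This first step is pure linear algebra: $\psi\Lambda^{-1}\psi^T=\sum_j \lambda_j^{-1}\psi_{\cdot j}\psi_{\cdot j}^T$ when $\Lambda$ is diagonal.

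Next I expand the inner product into the double sum $\sum_{i,k\in\mathcal{B}}\mathbb{E}\,\psi_{ji}\psi_{jk}^T$ and show that the cross terms with $i\neq k$ vanish. Each $\psi_{ji}(t,\theta)=\frac{\partial T_{ji}(q,\theta)}{\partial\theta}r_i(t)$ is a vector of stable linear filters applied to $r_i$ alone; stability is guaranteed by assumption (a). So each entry of $\psi_{ji}$ lies in the closed linear span of $\{r_i(t-\tau)\}_{\tau\ge 1}$. By assumption (b), $r_i$ and $r_k$ are independent zero-mean processes for $i\neq k$. Therefore $\mathbb{E}[r_i(t-\tau)r_k(t-s)]=0$ for all lags, and the expectation of any product of filtered versions is zero:
\begin{align*}
\mathbb{E}\,\psi_{ji}(t,\theta)\psi_{jk}^T(t,\theta) = \sum_{\tau,s} \frac{\partial h^{ji}_\tau}{\partial\theta}\Big(\frac{\partial h^{jk}_s}{\partial\theta}\Big)^T \mathbb{E}[r_i(t-\tau)r_k(t-s)] = 0.
\end{align*}
Here $h^{ji}_\tau$ denotes the impulse response coefficients of $T_{ji}$. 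Only the diagonal terms $i=k$ survive, which gives
\begin{align*}
M=\sum_{j\in\mathcal{C}}\sum_{i\in\mathcal{B}}\frac{N}{\lambda_j}\mathbb{E}\,\psi_{ji}\psi_{ji}^T=\sum_{j\in\mathcal{C},i\in\mathcal{B}}M_{ji},
\end{align*}
which is (\ref{eq:infmat}) by Definition~\ref{def:partialMji}.

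The main obstacle is justifying the interchange of expectation with the infinite sums defining the filtered signals. Stability in (a) makes the impulse responses of $T(q,\theta)$ and of its parameter derivatives absolutely summable; the derivatives are still rational with the same poles. Finite variances $\sigma_i$ then give mean-square convergence, so the cross-covariance can be computed term by term. A minor point is that the prediction error is built only from $r$, with the noise $e$ entering only through $\Lambda$; assumption (b), that $r$ is uncorrelated with $e$, ensures no noise–input cross terms appear.
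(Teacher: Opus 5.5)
Your proposal is correct and follows essentially the same route as the paper: the diagonal $\Lambda$ splits $M$ into a sum over measured nodes, and mutual independence of the zero-mean inputs makes the cross terms $\mathbb{E}\,\psi_{ji}\psi_{jk}^T$ vanish for $i\neq k$. Your impulse-response argument, with stability justifying the interchange of expectation and summation, makes explicit what the paper asserts in a single line.
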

\vspace{-7mm}
\begin{pf}
				Under the stated assumption the information matrix can be computed as $M:= N \mathbb{E} \psi(t) \Lambda^{-1} \psi^T(t)$. 
				Using (\ref{eq:psidec}) and the fact that $\Lambda = diag(\lambda_{c_1}, \lambda_{c_2}, \dots \lambda_{c_p})$, we can decompose $M$ as:
				\begin{align*}
								M =  \frac{N}{\lambda_{c_1}} \mathbb{E} \sum_{i \in \mathcal{B}} \psi_{c_1, i}(t, \theta) \sum_{i \in \mathcal{B} } \psi_{c_1, i}^T(t, \theta) +
								\cdots  + \\
								\frac{N}{\lambda_{c_p}}\mathbb{E} \sum_{i \in \mathcal{B}} \psi_{c_p, i}(t, \theta) \sum_{i \in \mathcal{B} } \psi_{c_p, i}^T(t, \theta) 
				.\end{align*}
				Since all inputs are mutually independent, it holds that the terms $\mathbb{E} \psi_{k, i}(t, \theta) \psi_{l, j}^T (t, \theta) = 0$, 
				for $k, l \in \mathcal{C}$ and $i \neq j \in \mathcal{B}$. 
				The remaining nonzero terms are the ones in (\ref{eq:defMji}).
\end{pf}
\vspace{-3mm}
This Lemma shows that we can compute the information matrix through the sum of the partial information matrices.
				This result is valid for any dynamic network subject to the standing assumptions (a)-(c).
We are now going to focus in obtaining the expressions for each partial information matrix of tree networks. 


\section{Computing the Information Matrix for tree networks}
\label{sec:compinfmat}

For trees, the transfer function $T_{ji}(q)$ is nonzero
if and only if there is a path from node $i$ to node $j$. 
Let $\mathcal{M}_{ji} := (a_{i, k_1}, a_{k_1, k_2}, \dots, a_{k_p, j})$ represent the modules in the path from node $i$ to $j$ and let $n_{p_{ji}}$ be the length of path $\mathcal{P}_{ji}$, i.e. $n_{p_{ji}} := |\mathcal{P}_{ji}|$, then the transfer function $T_{ji}$ is the product of the modules in the path
\begin{align}
				T_{ji} = q^{ -n_{p_{ji}}}
				\prod_{a \in \mathcal{M}_{ji}} a .
				\label{eq:Tji}
\end{align}
In order to obtain the gradient of $T_{ji}$ with respect to $a_{kl}$ we have two scenarios as follows
\begin{align}
				\frac{\partial T_{ji}}{\partial a_{kl}} =
				\begin{cases}
								r_i(t) q^{-n_{p_{ji}}}\frac{1}{a_{kl}}  \prod_{a \in \mathcal{M}_{ji}} a, &\text{if}\, a_{kl} \in \mathcal{M}_{ji}. \\
								 0, &\text{if}\, a_{kl} \not\in \mathcal{M}_{ji}.
				\end{cases}
        \label{eq:dTjidakl}
\end{align}
Some elements of the gradient are zeros associated with the parameters $a_{kl}$ that are not part of the path from node $i$ to node $j$. 
Let $\mathcal{L}_{ji}$ be the tuple of indexes associated with the parameters $a_{kl}$ in $\mathcal{M}_{ji}$ from path $\mathcal{P}_{ji}$.
The elements of the gradient associated with those parameters are:
%
\begin{align}
				\psi_{ji}[l] = r_i(t-n_{p_{ji}}) \frac{1}{a_{l}} \prod_{k \in \mathcal{L}_{ji}} a_k 
				\label{eq:psiji}
,\end{align}
where $[l]$ for $l \in \mathcal{L}_{ji}$ stands for those elements associated with parameters that are in the path $\mathcal{P}_{ji}$, 
 and $a_k$ stands for the $k$-th element of $\mathcal{M}_{ji}$. 
The partial information matrix can be obtained from the gradient.
Thus, the elements of the partial information matrix can be obtained as
\begin{align}
				M_{ji}[k, l] &= \frac{N}{\lambda_j} \mathbb{E} \psi_{ji}[k]\psi_{ji}[l]^T  
										 = \frac{N \sigma_i}{\lambda_j} \frac{1}{a_k a_l} \prod_{m \in \mathcal{L}_{ji}} a_m^2 
				\label{eq:Mji}
,\end{align}
where $M_{ji}[k, l]$ stands for the ($k, l)$ element of the partial information matrix $M_{ji}$ for $k, l \in \mathcal{L}_{ji}$. 
All other elements will be zero. 
One can easily obtain the information matrix from the partial information matrices associated with an EMP, as will be illustrated in the next section.

\section{Exciting or measuring?}
\label{sec:exciteormeasure}

Any valid EMP must have all sources excited and all sinks measured. 
For a tree, any choice of excitation or measurement of the internal nodes  
will result in a valid and possibly minimal EMP \citep{BazanellaEtAlNetworkIdentificationPartial2019a}. 
So, in a tree the choice of the best (i.e. most accurate) EMP boils down to 
choosing, for each internal node, whether to excite or to measure it. 
In this section we examine this question: whether we should excite or measure a given internal node. 
In order to gain insights into this problem, we will consider two simple examples of specific tree topologies, which we call a cross and inverted cross. 
Two five nodes cross and inverted cross are depicted in Figure \ref{fig:Icross}. 
\begin{figure}[h!]
    \centering

    \begin{minipage}{0.25\columnwidth}
        \centering
        \includegraphics[width=\linewidth]{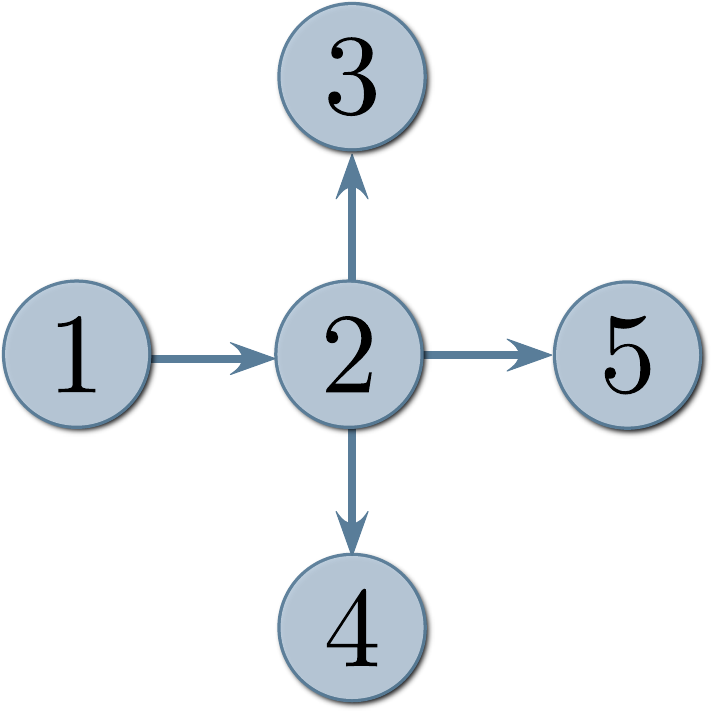}

        \small (a) A 5-node cross
    \end{minipage}
    \hspace{10pt}
    \begin{minipage}{0.25\columnwidth}
        \centering
        \includegraphics[width=\linewidth]{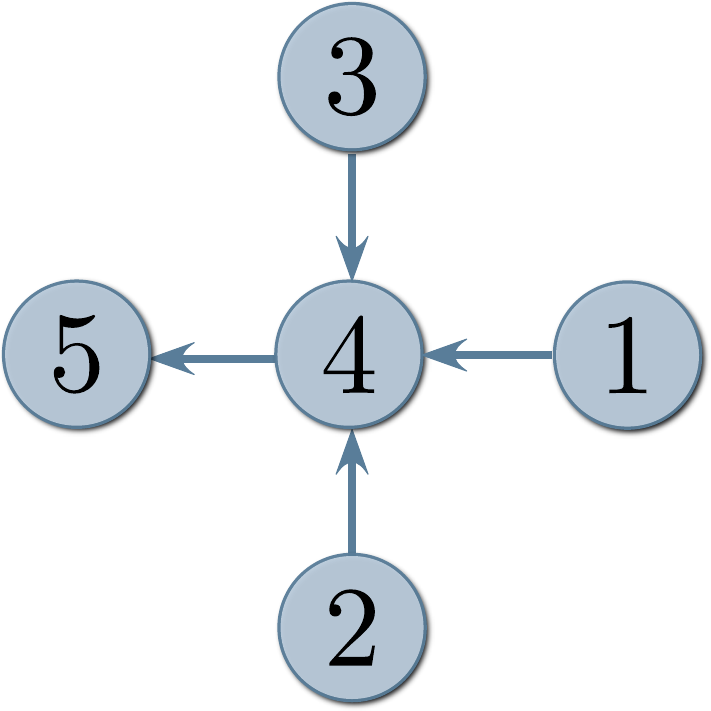}

        \small (b) A 5-node inverted cross
    \end{minipage}

    \caption{Two classes of tree networks.}
    \label{fig:Icross}
\end{figure}

%
%
The network matrix for the cross and the inverted cross are given by
\begin{align}
				A_c = 
				\begin{bmatrix}
            0 & 0 & 0 & 0 & 0 \\
            a_{21}^0 & 0 & 0 & 0 & 0 \\
            0 & a_{32}^0 & 0 & 0 & 0 \\
            0 & a_{42}^0 & 0 & 0 & 0 \\
            0 & a_{52}^0 & 0 & 0 & 0 \\
				\end{bmatrix},
				A_i^T =
				\begin{bmatrix} 
            0 & 0 & 0 & a_{41}^0 & 0 \\
            0 & 0 & 0 & a_{42}^0 & 0 \\
            0 & 0 & 0 & a_{43}^0 & 0 \\
            0 & 0 & 0 & 0 & a_{54}^0 \\
            0 & 0 & 0 & 0 & 0 \\
				\end{bmatrix}
				\label{eq:netcross}
.\end{align}

Let us start our analysis with the cross network. 
Notice that node $1$ is a source, while nodes $3, 4, 5$ are sinks.
From the necessary and sufficient conditions for trees given in Proposition~\ref{teo:ExciteOrMeasure}, there are only two minimal EMPs:
$\emp{I} : \left(\mathcal{B} = \{1, 2\}, \mathcal{C} = \{3, 4, 5\}\right)$ and
$\emp{II} : \left(\mathcal{B} = \{1\}, \mathcal{C} = \{2, 3, 4, 5\}\right)$.
The difference between these two minimal EMPs is whether we decide to excite or measure node 2. 
There are four parameters to be estimated in this network, namely, $a_{21}, a_{32}, a_{43}, a_{54}$. 
The information matrix associated with $\emp{I}$ is easily obtained by exploring the concept of partial information matrix:
%
\begin{align*}
				&M_I = M_{31} + M_{32} + M_{41} + M_{42} + M_{51} + M_{52} = \\ 
				&\resizebox{.9\hsize}{!}{$\left[\begin{matrix}
												\sigma_1\left[\frac{a_{52}^{2}}{\lambda_{5}} + \frac{a_{42}^{2}}{\lambda_{4}} + \frac{ a_{32}^{2}}{\lambda_{3}}\right] & 
                        {\sigma_{1} a_{21} a_{32}}/{\lambda_{3}} &
                        {\sigma_{1} a_{21} a_{42}}/{\lambda_{4}} & 
                        {\sigma_{1} a_{21} a_{52}}/{\lambda_{5}}\\
                        {\sigma_{1} a_{21} a_{32}}/{\lambda_{3}} & {[\sigma_{1} a_{21}^{2} + \sigma_2]}/{\lambda_{3}} & 0 & 0\\
                {\sigma_{1} a_{21} a_{42}}/{\lambda_{4}} & 0 & {[\sigma_{1} a_{21}^{2} + \sigma_2]}/{\lambda_{4}}  & 0\\
                {\sigma_{1} a_{21} a_{52}}/{\lambda_{5}} & 0 & 0 & {[\sigma_{1} a_{21}^{2} + \sigma_2]}/{\lambda_{5}} 
\end{matrix}\right]$}
.\end{align*}
whereas for \emp{2} we have:
\begin{align*}
				M_{II} = M_{21} + M_{31} + M_{41} + M_{51}
.\end{align*}
The inverse of the information matrix will provide the asymptotic accuracy of the parameter estimates. 
In order to compare these minimal EMPs, we will consider the effect on the accuracy of these EMPs on the estimates of $a_{21}$.
The asymptotic accuracy of $\hat{a}_{21}$ for \emp{I} and \emp{II} are given as follows
\begin{align}
				{\rm{cov}}(\hat{a}_{21}^{I}) &=
\frac{\lambda_{3} \lambda_{4} \lambda_{5} \left(\sigma_{1} a_{21}^{2} + \sigma_{2}\right)}{\sigma_{1} \sigma_{2} \left(\lambda_{3} \lambda_{4} a_{52}^{2} + \lambda_{3} \lambda_{5} a_{42}^{2} + \lambda_{4} \lambda_{5} a_{32}^{2}\right)}
\label{eq:cova21I}
\\
				{\rm{cov}}(\hat{a}_{21}^{II}) &= {\lambda_2}/{\sigma_1} \label{eq:cova21II}
.\end{align}
where the superscript $I,II$ stands for the respective EMP. 
Notice that for \emp{II}, the covariance of $\hat{a}_{21}$ only depends on the SNR applied from excitation of node $1$ and measurement of node $2$.
This phenomenon has already been observed for branches in 
\cite{mapurunga-bazanella-optimal-excitation-2023}. 
When node $2$ is measured, the covariance of $\hat{a}_{21}$ does not depend on the other measurements from the network, 
				as expected.
This means that only the adjacent nodes have an influence on the accuracy of $\hat{a}_{21}$ provided node $2$ is measured. 
With respect to the accuracy, whether to excite or measure node 2, we have the following result.

\begin{proposition}
  Consider a cross network defined by the network matrix $A_c$ in (\ref{eq:netcross}). For a fully symmetric network (Definition \ref{def:UE}), the asymptotic estimate of $a_{21}$ given by \emp{I} (exciting node 2) will be more accurate than \emp{II} counterpart if and only if
				$a_{21}^2 > 0.5$. 
\label{prop:cross}
\end{proposition}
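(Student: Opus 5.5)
The plan is to start from the two closed-form expressions (\ref{eq:cova21I}) and (\ref{eq:cova21II}) and specialize them to the fully symmetric setting of Definition~\ref{def:UE}. Under uniform excitation we set $\sigma_1=\sigma_2=\sigma$ and $\lambda_2=\lambda_3=\lambda_4=\lambda_5=\lambda$. Equal module magnitudes give $a_{21}^2=a_{32}^2=a_{42}^2=a_{52}^2=:a^2$. Only squares of the parameters enter both expressions, so signs are irrelevant.

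Before comparing, I would briefly confirm (\ref{eq:cova21I}) itself, since the whole argument rests on it. Write $M_I$ in block form, with the scalar $(1,1)$ entry $m$, the coupling row $b^T$, and the diagonal lower block $D=\frac{\sigma a_{21}^2+\sigma}{\lambda}I_3$ (here $\sigma_2=\sigma$). The $(1,1)$ entry of $M_I^{-1}$ is the inverse Schur complement $(m-b^TD^{-1}b)^{-1}$. For each sink $k\in\{3,4,5\}$ this gives the contribution
\[
\frac{\sigma_1 a_{k2}^2}{\lambda_k}-\frac{\sigma_1^2 a_{21}^2 a_{k2}^2/\lambda_k}{\sigma_1 a_{21}^2+\sigma_2}=\frac{\sigma_1\sigma_2 a_{k2}^2}{\lambda_k(\sigma_1a_{21}^2+\sigma_2)},
\]
and summing over $k$ and inverting reproduces (\ref{eq:cova21I}).

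For \emp{I}, the denominator becomes $\sigma^2\cdot 3\lambda^2 a^2$ and the numerator becomes $\lambda^3\sigma(a^2+1)$. This gives $\mathrm{cov}(\hat a_{21}^{I})=\frac{\lambda(1+a^2)}{3\sigma a^2}$, whereas $\mathrm{cov}(\hat a_{21}^{II})=\lambda/\sigma$. \emp{I} is more accurate if and only if $\frac{1+a^2}{3a^2}<1$, which is equivalent to $1+a^2<3a^2$, that is, $a^2>0.5$. The reverse inequality holds for $a^2<0.5$, and there is equality at $a^2=0.5$, so this establishes the ``if and only if''.

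The only real subtlety is the Schur-complement check; the rest is direct substitution. I would also note that $a\neq 0$ is implicit, since the modules are edges of the tree.
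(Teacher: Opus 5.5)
Your proposal is correct and follows the paper's proof. Like the paper, you specialize (\ref{eq:cova21I})--(\ref{eq:cova21II}) to the fully symmetric case and reduce the comparison to $1+a_{21}^2<3a_{21}^2$, i.e.\ $a_{21}^2>0.5$; your Schur-complement check of (\ref{eq:cova21I}) is a valid extra the paper omits (for that general check the lower block should be $\mathrm{diag}\bigl((\sigma_1a_{21}^2+\sigma_2)/\lambda_k\bigr)$, which is what your displayed computation actually uses, rather than a multiple of $I_3$).
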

\vspace{-4mm}
\begin{pf}
				Under the stated assumption, we have the following expression from (\ref{eq:cova21I})-(\ref{eq:cova21II})
				\begin{align}
								{\rm{cov}}(\hat{a}_{21}^I) &= \left(  {\lambda_2} {(a_{21}^2 + 1)}\right) /\left(  {\left(a_{32}^2 + a_{42}^2 + a_{52}^2  \right) {\sigma_1} }\right) \\
								{\rm{cov}}(\hat{a}_{21}^{II}) &= {\lambda_2}/{\sigma_1} 
				\end{align}
				The covariance of $\hat{a}_{21}$ will be more accurate in \emp{I} if and only if
        				$a_{21}^2 + 1 < a_{32}^2 + a_{42}^2 + a_{52}^2$. 
				For a fully symmetric network, we have that 
								$a_{21}^2 + 1  < 3 a_{21}^2 \iff a_{21}^2 > 0.5.$  
\end{pf}
\vspace{-4mm}
This result shows that the choice between exciting or measuring a node with respect to the accuracy of a module depends,
among other things, on the module magnitudes. 

A similar analysis can be made for the accuracy of $a_{32}, a_{42}$ and $a_{52}$. 
Due to the symmetry, the analysis for any of these modules will be virtually the same. 
For this reason, we are going to focus on the accuracy of $a_{52}$. 
One would expect that by choosing \emp{I}, this would imply that the covariance of $\hat{a}_{52}$ would be just the inverse of the SNR applied at the corresponding nodes.
However, differently from the branch case, this does not happen here. 
In fact, we have that
\begin{align}
				{\rm{cov}}({\hat{a}^I_{52}}) &= \frac{\lambda_{5} 
				\left[\lambda_{3} \lambda_{4} a_{52}^2 [\sigma_{1} a_{21}^{2} + \sigma_{2}] +  \sigma_{2} [\lambda_3 a_{42}^{2} + \lambda_{4} a_{32}^{2}]\right]}
				{\sigma_{2} \left[\sigma_{1} a_{21}^{2} + \sigma_{2}\right] \left[\lambda_{3} \lambda_{4} a_{52}^{2} + \lambda_{3} \lambda_{5} a_{42}^{2} + \lambda_{4} \lambda_{5} a_{32}^{2}\right]} \nonumber\\
								{\rm{cov}}\left(\hat{a}^{II}_{52} \right) &= {\left(  \lambda_{2} a_{52}^{2} + \lambda_{5}\right) }/{\left(  \sigma_{1} a_{21}^{2}\right) } \label{eq:a52sur}
.\end{align}
This differs from the branch case, where one would expect that the covariance of $\hat{a}^{I}_{52}$ does depend on all the parameters from the network. 
As for the selection of exciting or measuring node 2 with respect to the accuracy of all parameters in the network, we have the following result.
\begin{theorem}
  Consider the cross defined by the network matrix $A_c$ in (\ref{eq:netcross}). For a fully symmetric network (Definition~\ref{def:UE}), \emp{I} yields the most accurate estimates with respect to the trace of the asymptotic covariance matrix.
				\label{theo:cross}
\end{theorem}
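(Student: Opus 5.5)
The plan is to compute both traces in closed form under the fully symmetric assumption and compare them directly. Set $a_{21}=a_{32}=a_{42}=a_{52}=a$, $\sigma_i=\sigma$ and $\lambda_j=\lambda$. Every entry of each partial information matrix in (\ref{eq:Mji}) carries the common factor $N\sigma/\lambda$. So $M_I$ and $M_{II}$ are both $N\sigma/\lambda$ times a matrix depending only on $x:=a^2>0$, and both traces of $P$ scale by the same $\lambda/(N\sigma)$. It therefore suffices to take $N\sigma/\lambda=1$.

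First I will write both information matrices using Lemma~\ref{lem:infmat} and (\ref{eq:Mji}), ordering the parameters as $(a_{21},a_{32},a_{42},a_{52})$.
\begin{itemize}
\item For \emp{I}, the explicit $M_I$ above becomes an arrow matrix. Its first diagonal entry is $d_0=3x$, the coupling vector is $b=x\mathbf{1}_3$, and the trailing block is $D=(x+1)I_3$.
\item For \emp{II}, $M_{21}$ contributes only $1$ in the $(1,1)$ entry. Each $M_{k1}$, $k=3,4,5$, is supported on $\{a_{21},a_{k2}\}$ with all entries equal to $x$. Hence $M_{II}$ is again an arrow matrix, with $d_0=1+3x$, $b=x\mathbf{1}_3$ and $D=xI_3$.
\end{itemize}

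The key tool is a Schur-complement formula for the trace of the inverse of such an arrow matrix. With $D=dI_m$ and $s=d_0-\|b\|^2/d$, the block inversion formula gives
\begin{align*}
\operatorname{tr}\Big(\begin{bmatrix} d_0 & b^T\\ b & dI_m\end{bmatrix}^{-1}\Big)=\frac{1}{s}+\frac{m}{d}+\frac{\|b\|^2}{d^2 s}.
\end{align*}

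Applying this formula to the two EMPs gives the following.
\begin{itemize}
\item For \emp{I}: $s=3x-3x^2/(x+1)=3x/(x+1)$, so $\operatorname{tr}P_I=\frac{x+1}{3x}+\frac{3}{x+1}+\frac{x}{x+1}=\frac{4}{3}+\frac{1}{3x}+\frac{2}{1+x}$.
\item For \emp{II}: $s=1+3x-3x=1$, so $\operatorname{tr}P_{II}=1+\frac{3}{x}+3=4+\frac{3}{x}$.
\end{itemize}
These can be cross-checked against (\ref{eq:cova21I})--(\ref{eq:a52sur}) specialized to the symmetric case; the diagonal entries must agree.

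The final step is an elementary inequality. Since $2/(1+x)<2$ for $x>0$, we get $\operatorname{tr}P_I<\frac{10}{3}+\frac{1}{3x}<4+\frac{3}{x}=\operatorname{tr}P_{II}$ for every $x>0$. This proves that \emp{I} is strictly better, with no condition on the module magnitude, in contrast with Proposition~\ref{prop:cross}.

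The only real obstacle is bookkeeping. One must get the supports of the partial matrices right, in particular that $M_{II}$ has no couplings among $a_{32},a_{42},a_{52}$, so that both matrices are genuinely arrow-shaped. One must also apply the block inverse correctly. Once both traces are in closed form, the comparison is immediate.
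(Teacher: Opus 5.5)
Your proposal is correct and follows the same route as the paper. You specialize to the fully symmetric case, obtain both traces in closed form, and compare them; your arrow-matrix/Schur-complement computation supplies the derivation the paper omits and reproduces exactly its ratio $\mathrm{tr}(P_I)/\mathrm{tr}(P_{II})=\frac{4a_{21}^4+11a_{21}^2+1}{3(a_{21}^2+1)(4a_{21}^2+3)}$. Clearing denominators in that ratio actually gives $8a_{21}^4+10a_{21}^2+8>0$, so the coefficient $17$ in the paper is a slip; your direct bound $\mathrm{tr}(P_I)<\frac{10}{3}+\frac{1}{3x}<4+\frac{3}{x}=\mathrm{tr}(P_{II})$ avoids that step entirely.
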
%
\vspace{-5mm}
\begin{pf}
				Under the stated assumptions, we have that $a_{21} = a_{32} = a_{42} = a_{52}$.
				We then have that \emp{I} is most accurate than \emp{II} if and only if
								${\rm{tr}(P_{I})}/{\rm{tr}(P_{II})} < 1$. 
				This is equivalent to
				\begin{equation*}
								\frac{4 a_{21}^{4} + 11 a_{21}^{2} + 1}{3 \left(a_{21}^{2} + 1\right) \left(4 a_{21}^{2} + 3\right)} < 1 
								\iff
								8 a_{21}^4 +17 a_{21}^2 + 8 > 0. 
                \tag*{$\blacksquare$}
				\end{equation*}
        \renewcommand{\qedsymbol}{}
\end{pf}
\vspace{-5mm}
This result indicates that, when all factors are equal, exciting node $2$ is preferable to measuring it.
We should expect, as in the branch case, that excitation of node $2$ will result in most accurate estimates most of the times under different scenarios, as evidence suggested by the results for branches 
\citep{MapurungaBazanellaOptimalAllocationExcitation2021, mapurunga-bazanella-optimal-excitation-2023}. 

Let us analyze now the inverted cross, 
defined by $A_i$ in (\ref{eq:netcross}).
The parameters to be identified are $a_{41}, a_{42}, a_{43}$ and $a_{54}$. 
As in the cross network, there are two minimal EMPs for this network, namely, 
\emp{I}: $\left(\mathcal{B} = \{1, 2, 3, 4\} , \mathcal{C} = \{5\} \right) $ and
\emp{II}: $\left(\mathcal{B} = \{1, 2, 3\} , \mathcal{C} = \{4, 5\} \right) $.
The difference between these two minimal EMPs relies on whether we decide to excite or measure the node $4$.

First, let us consider the accuracy of $\hat{a}_{54}$, this parameter is the dual of the cross network parameter $a_{21}$.
The covariances for both minimal EMPs are given as: 
\begin{align}
				{\rm{cov}}\left( \hat{a}^{I}_{54} \right) &= {\lambda_5}/{\sigma_4} \label{eq:cova54I}	 \\ 
				{\rm{cov}}\left( \hat{a}^{II}_{54} \right) &={\left(\lambda_{4} a_{54}^{2} + \lambda_{5}\right)}/{\left(\sigma_{1} a_{41}^{2} + \sigma_{2} a_{42}^{2} + \sigma_{3} a_{43}^{2}\right)} \label{eq:cova54II}	
.\end{align}
As expected, for \emp{I}, the covariance of $\hat{a}_{54}$ is just the inverse of SNR, whereas for \emp{II} is a function of all network parameters. 
Let us now consider the covariance of $\hat{a}_{41}$, which will be similar to those of $\hat{a}_{42}$ and $\hat{a}_{43}$. 
The covariances are given in what follows.
\begin{align*}
				{\rm{cov}}(\hat{a}^I_{41}) &= {\left(\lambda_{5} \sigma_{1} a_{41}^{2} + \lambda_{5} \sigma_{4}\right)}/{\left(\sigma_{1} \sigma_{4} a_{54}^{2}\right)}\\
				{\rm{cov}}(\hat{a}^{II}_{41}) &= 
				\frac{\lambda_{4} \left(\lambda_{4} \sigma_{1} a_{41}^{2} a_{54}^{2} + \lambda_{5} (\sigma_{1} a_{41}^{2} +  \sigma_{2} a_{42}^{2} + \sigma_{3} a_{43}^{2})\right)}
				{\sigma_{1} \left(\lambda_{4} a_{54}^{2} + \lambda_{5}\right) \left(\sigma_{1} a_{41}^{2} + \sigma_{2} a_{42}^{2} + \sigma_{3} a_{43}^{2}\right)}
.\end{align*}
Similar to the cross network case, the covariance of $\hat{a}_{41}$ depends on all parameters of the network. 
This phenomenon is somehow remarkable since one would expect that $a_{42}$ and $a_{43}$ should not affect the accuracy of $\hat{a}_{41}$ from the network topology.  
As in the cross network case, we have the following results regarding the inverted cross. 

\begin{proposition}
				Consider the inverted cross defined by equations. 
				For a fully symmetric network, see Definition~\ref{def:UE}, the asymptotic estimates of ${a}_{54}$ will be more accurate for \emp{II} with respect to \emp{I} if and only if
					$a_{54}^2 > 0.5$			
				\label{prop:invcrossa54}
\end{proposition}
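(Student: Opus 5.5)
The plan mirrors the proof of Proposition~\ref{prop:cross}, exploiting the duality between the cross and the inverted cross. It starts from the closed-form covariances (\ref{eq:cova54I})--(\ref{eq:cova54II}), specializes them to the fully symmetric case, and compares the two scalars directly.

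First, I would confirm where (\ref{eq:cova54I})--(\ref{eq:cova54II}) come from, so the argument rests on Lemma~\ref{lem:infmat} and (\ref{eq:Mji}).

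For \emp{I}, with $\mathcal{B}=\{1,2,3,4\}$ and $\mathcal{C}=\{5\}$, the information matrix is $M_{51}+M_{52}+M_{53}+M_{54}$. Only $M_{54}$ is concentrated on the single entry $a_{54}$, with value $N\sigma_4/\lambda_5$. The other three terms couple $a_{54}$ with $a_{4k}$, $k=1,2,3$. Inverting this arrow-shaped matrix through a Schur complement should leave $\lambda_5/\sigma_4$, because the coupling terms are exactly absorbed by the $a_{4k}$ blocks.

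For \emp{II}, with $\mathcal{B}=\{1,2,3\}$ and $\mathcal{C}=\{4,5\}$, the information matrix is $\sum_{k=1}^3 (M_{4k}+M_{5k})$. The $(a_{54},a_{54})$ Schur complement should give (\ref{eq:cova54II}). Since these formulas are stated in the paper, I would simply invoke them. Only if needed would I sketch the Schur complement computation, which is the sole algebraically nontrivial step. Because the matrix is bordered, with one row and column coupling to a block-structured remainder, it should be manageable.

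Next, I would impose Definition~\ref{def:UE}: $\sigma_i=\sigma$, $\lambda_j=\lambda$, and $a_{41}=a_{42}=a_{43}=a_{54}$. This gives
\begin{align*}
{\rm{cov}}(\hat a_{54}^{I}) = \frac{\lambda}{\sigma}, \qquad
{\rm{cov}}(\hat a_{54}^{II}) = \frac{\lambda (a_{54}^2+1)}{3\sigma a_{54}^2}.
\end{align*}

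Then \emp{II} is more accurate if and only if $(a_{54}^2+1)/(3a_{54}^2) < 1$. Since $a_{54}\neq 0$ (it is an edge of the tree), this is equivalent to $a_{54}^2+1 < 3a_{54}^2$, that is, $a_{54}^2>0.5$. Both directions of the equivalence hold because every manipulation multiplies by positive quantities.

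I expect no real obstacle beyond the Schur complement verification, if one insists on rederiving (\ref{eq:cova54II}) rather than quoting it. I would also remark that the condition is the exact dual of Proposition~\ref{prop:cross}, with the roles of excitation and measurement exchanged.
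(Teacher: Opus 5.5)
Your proposal is correct and takes essentially the same route as the paper. You specialize (\ref{eq:cova54I})--(\ref{eq:cova54II}) to the fully symmetric case, compare $\lambda/\sigma$ with $\lambda(a_{54}^2+1)/(3\sigma a_{54}^2)$, and reduce the comparison to $3a_{54}^2 > a_{54}^2+1$. Your optional Schur-complement check of the two covariance formulas, which the paper simply states, is also sound: the complement reduces to $\sigma_4/\lambda_5$ for \emp{I} and to $\sum_{k=1}^{3}\sigma_k a_{4k}^2/(\lambda_4 a_{54}^2+\lambda_5)$ for \emp{II}.
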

\vspace{-4mm}
\begin{pf}
				For a fully symmetric network we have that $a_{41} = a_{42} = a_{43} = a_{54}$. 
				\emp{II} yields a more accurate  covariance of $\hat{a}_{54}$ if and only if
$3 a_{54}^2 > a_{54}^2 + 1 \iff a_{54}^2 > 0.5.$ 
\end{pf}
\vspace{-3mm}
This result is the dual of Proposition \ref{prop:cross}. 
For a fully symmetric inverted cross, it is better to measure node $4$ if the magnitude of the parameter is big. 
As for the accuracy of all parameters, the dual result is the following. 

\begin{theorem}
				Consider the inverted cross defined by $A_i$ in (\ref{eq:netcross}). 
				For a fully symmetric network (see Definition~\ref{def:UE}), \emp{II} will be more accurate than \emp{I} with respect to the trace of the covariance matrix.
				\label{theo:icross}
\end{theorem}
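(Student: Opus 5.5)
The plan is to follow the proof of Theorem~\ref{theo:cross} closely: build both information matrices from partial information matrices (Lemma~\ref{lem:infmat} and (\ref{eq:Mji})), invert them, and compare traces. By Definition~\ref{def:UE} we may set $a_{41}=a_{42}=a_{43}=a_{54}=a$, $\sigma_i=\sigma$ and $\lambda_j=\lambda$. Both information matrices then carry the common factor $N\sigma/\lambda$. This factor cancels in the ratio ${\rm tr}(P_{II})/{\rm tr}(P_{I})$, so we may take $N\sigma/\lambda=1$. Order the parameters as $(a_{41},a_{42},a_{43},a_{54})$.

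First we assemble the matrices. For \emp{I} we have $M_I=M_{51}+M_{52}+M_{53}+M_{54}$. By (\ref{eq:Mji}), each $M_{5i}$ with $i\in\{1,2,3\}$ is supported on the entries indexed by $a_{4i}$ and $a_{54}$, all equal to $a^2$. The term $M_{54}$ contributes $1$ in the $(4,4)$ entry. Hence
\begin{equation*}
M_I = a^2\begin{bmatrix} 1&0&0&1\\0&1&0&1\\0&0&1&1\\1&1&1&3\end{bmatrix} + e_4e_4^T .
\end{equation*}
For \emp{II} we have $M_{II}=M_{41}+M_{42}+M_{43}+M_{51}+M_{52}+M_{53}$. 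The same $a^2$-matrix appears, but now $M_{4i}$ adds $1$ on the diagonal entry of $a_{4i}$, and there is no $e_4e_4^T$ term:
\begin{equation*}
M_{II}=a^2\begin{bmatrix} 1&0&0&1\\0&1&0&1\\0&0&1&1\\1&1&1&3\end{bmatrix}+{\rm diag}(1,1,1,0).
\end{equation*}

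Next we compute the traces of the inverses. Both matrices have the arrow form $\begin{bmatrix} D & b\mathbf{1}\\ b\mathbf{1}^T & d\end{bmatrix}$, with $D=\alpha I_3$. This allows an explicit inverse via the Schur complement $s=d-3b^2/\alpha$:
\begin{equation*}
{\rm tr}(M^{-1}) = \frac{3}{\alpha}+\frac{3b^2}{\alpha^2 s}+\frac{1}{s}.
\end{equation*}
For \emp{I} the parameters are $\alpha=a^2$, $b=a^2$, $d=3a^2+1$, which gives $s=1$. For \emp{II} they are $\alpha=a^2+1$, $b=a^2$, $d=3a^2$, which gives $s=3a^2/(a^2+1)$. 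As a check, these reproduce (\ref{eq:cova54I})--(\ref{eq:cova54II}) and the expressions for ${\rm cov}(\hat a_{41})$ given above.

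Finally we form the ratio ${\rm tr}(P_{II})/{\rm tr}(P_{I})$ and clear the (positive) denominators. We expect the ratio to be the same rational function as in the proof of Theorem~\ref{theo:cross}, reflecting the duality between the cross and the inverted cross. The condition ${\rm tr}(P_{II})<{\rm tr}(P_{I})$ should then reduce to a polynomial in $a^2$ with all positive coefficients, such as $8a^4+17a^2+8>0$, which holds for every $a\neq 0$.

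The main obstacle is the algebraic simplification in this last step. One must verify that, after cancelling the common factors (powers of $a^2$ and $a^2+1$), no sign-indefinite terms remain. If the duality shortcut does not reproduce the cross expression exactly, the fallback is to expand both traces directly from the Schur-complement formula above and check the positivity of the resulting numerator coefficient by coefficient.
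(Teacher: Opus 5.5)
Your plan is correct and follows the paper's route; the paper's own proof just says it is dual to Theorem~\ref{theo:cross}. Your matrices make that duality explicit: $M_I$ here has exactly the arrow structure of the cross's $M_{II}$ and vice versa, so the trace ratio must coincide with the cross one.

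You should still carry out the last step rather than ``expect'' it, and it closes immediately from your Schur-complement formula. It gives $\mathrm{tr}(P_I)\propto 4+3/a^2$ and $\mathrm{tr}(P_{II})\propto \frac{a^2+3}{a^2+1}+\frac{a^2+1}{3a^2}=\frac{4a^4+11a^2+1}{3a^2(a^2+1)}$. Hence $\mathrm{tr}(P_{II})<\mathrm{tr}(P_I)$ is equivalent to $4a^4+11a^2+1<3(a^2+1)(4a^2+3)$, i.e.\ $8a^4+10a^2+8>0$, which holds for every $a\neq 0$. Do not copy the $8a^4+17a^2+8$ from the cross proof: the $17a^2$ printed there is an arithmetic slip, harmless for the conclusion.
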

\vspace{-5mm}
\begin{pf}
			The proof is dual to Theorem \ref{theo:cross}. 
\end{pf}
\vspace{-4mm}
This section suggests that it is better to measure internal nodes for which there are many excited in-neighbors, 
whereas it is better to excite internal nodes in which there are many out-neighbors. 
Furthermore, for network parameters with large magnitude, it is better to measure the in-neighbors of sinks and excite the out-neighbors of sources.  
In addition, from the duality observed by the cross and inverted cross, one should excite nodes with many out-neighbors and measure nodes with many in-neighbors. 
What the study of the crosses reveals is the following.
\begin{enumerate}
\item To optimize the precision of the overall identification:
	\begin{enumerate}
	\item nodes with large in-degrees should be measured;
	\item nodes with large out-degrees should be excited;
	\item EMPs that results in a larger number of identified $T$'s should be preferred. 
	\end{enumerate}
\item To otimize the precision of a particular edge:
	\begin{enumerate}
	\item the best choice depends on the magnitude of the particular edge and sometimes on the magnitudes of the other edges as well;
	\item direct identification (that is, excite the start and measure the end of the edge) is best if the edge's magnitude is small. 
	\end{enumerate}	
\end{enumerate}

      This guideline should be followed on top of the one for branches: excite nodes near the sources and measure nodes near the sinks. 

\section{When does only the SNR matter?}
\label{sec:justSNR}

In \cite{mapurunga-bazanella-optimal-excitation-2023}, it was shown that, for branch networks, measuring node $2$ renders the accuracy of the first module independent of the remaining network; the dual result was also established.
As shown in the previous section, this property only partially extends to tree networks.
Now, we provide a sufficient condition characterizing this phenomenon for tree networks. 

\begin{theorem}
				Consider any directed tree. 
				Let $a_{ji}q^{-1}$ be a module of interest. 
				We have the following.
				\begin{enumerate}
								\item Let $i$ be a source and $j$ its unique out-neighbor. 
								If $i$ is the only in-neighbor of $j$, then measuring $j$ renders the accuracy of $\hat{a}_{ji}$ independent of all other modules; 
								\item Let $j$ be a sink and $i$ its unique in-neighbor.
								If $j$  is the only out-neighbor of $i$, then exciting $i$ renders the accuracy of $\hat{a}_{ji}$ independent of all other modules;
				\end{enumerate}
				In both cases, the accuracy of $\hat{a}_{ji}$ will be the inverse of the SNR: ${\rm{cov}}(\hat{a}_{ji}) \sim {\lambda_j}/{\sigma_i}$.
\end{theorem}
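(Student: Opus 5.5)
The plan is to read ${\rm cov}(\hat a_{ji}) = e^T M^{-1} e$ directly off the structure of the information matrix, where $e$ is the unit vector of the coordinate $a_{ji}$. The two ingredients are Lemma~\ref{lem:infmat} and the gradient formula (\ref{eq:psiji}). I treat item 1 in detail; item 2 follows by duality, reversing all edges and exchanging the roles of $\mathcal{B}$ and $\mathcal{C}$ as in the cross and inverted cross.

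\emph{Step 1: split $M$.} In item 1, $i$ is a source, so $i\in\mathcal{B}$, and $j$ is measured. Since the EMP is minimal, $j\notin\mathcal{B}$. By Lemma~\ref{lem:infmat}, write $M = M_{ji} + M_{\rm rest}$, where $M_{\rm rest}$ collects all other partial information matrices. The only path into $j$ is the single edge from $i$, and $i$ has no in-neighbours. Hence $\mathcal{L}_{ji}$ contains only $a_{ji}$, and (\ref{eq:Mji}) gives $M_{ji} = (N\sigma_i/\lambda_j)\, e e^T$. In the notation of the statement, with the normalisation $N$ absorbed in $\sim$, this is $(\sigma_i/\lambda_j)\,ee^T$.

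\emph{Step 2: a null vector of $M_{\rm rest}$.} This is the key step and the main obstacle.
\begin{itemize}
\item Since $a_{ji}$ lies only on paths that start at $i$ and pass through $j$, the only other partial matrices involving $a_{ji}$ are $M_{ki}$ with $k\in\mathcal{C}$ strictly downstream of $j$.
\item Every such path uses exactly one out-edge $(j,m)$ of $j$. By (\ref{eq:psiji}), $\psi_{ki}[a_{ji}] = (a_{mj}/a_{ji})\,\psi_{ki}[a_{mj}]$.
\item The parameter $a_{mj}$ appears only in gradients driven by inputs entering upstream of $j$. Because $j$ is not excited and its only ancestor is $i$, that input is $r_i$ alone.
\end{itemize}
Define $v$ with $v[a_{ji}]=1$, $v[a_{mj}] = -a_{mj}/a_{ji}$ for each out-neighbour $m$ of $j$, and zeros elsewhere. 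Then $\psi_{ki}^T v = 0$ for every partial gradient other than $\psi_{ji}$, so $M_{\rm rest}v=0$ and $e^Tv=1$. The care needed here is exactly in checking that no other excited node injects signal through the edges $a_{mj}$; this is where the hypotheses ``$i$ source'' and ``$i$ only in-neighbour of $j$'' are used.

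\emph{Step 3: variational characterisation.} Since $M\succ 0$, we have $e^T M^{-1} e = \max_{x\neq 0} (e^Tx)^2/(x^TMx)$.
\begin{itemize}
\item Lower bound: taking $x=v$ gives $v^TMv = (\sigma_i/\lambda_j)(e^Tv)^2 = \sigma_i/\lambda_j$, so $e^TM^{-1}e \ge \lambda_j/\sigma_i$.
\item Upper bound: for every $x$, $M_{\rm rest}\succeq 0$ gives $x^TMx \ge (\sigma_i/\lambda_j)(e^Tx)^2$, so $e^TM^{-1}e\le \lambda_j/\sigma_i$.
\end{itemize}
Hence ${\rm cov}(\hat a_{ji})\sim\lambda_j/\sigma_i$, which is independent of all other modules.

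\emph{Item 2.} Here $j$ is a sink, $i$ is its unique in-neighbour, and $j$ is the only out-neighbour of $i$. With $i$ excited, $M_{ji}$ again reduces to $(\sigma_i/\lambda_j)ee^T$. For any excited $l$ upstream of $i$, $\psi_{jl}[a_{ji}] = (a_{im}/a_{ji})\,\psi_{jl}[a_{im}]$, where $(m,i)$ is the in-edge of $i$ on the path. The parameters $a_{im}$ reach a measurement only through $j$, because $i$ is not measured and has no other out-neighbour. So the vector with entries $1$ at $a_{ji}$ and $-a_{im}/a_{ji}$ at the in-edges of $i$ is again a null vector of $M_{\rm rest}$, and Step 3 applies verbatim.
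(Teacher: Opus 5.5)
Your proof is correct and is essentially the paper's argument: the paper's matrix $\rho$ has your vector $v$ as its first row, and its claim that $\rho M\rho^\top$ is block diagonal with $(1,1)$ entry $\sigma_1/\lambda_2$ is equivalent to your $M_{\rm rest}v=0$, so the only difference is that you use a two-sided Rayleigh-quotient bound where the paper uses that congruence. Your version is also more complete, since you verify the null-vector property (which the paper only asserts) and you cover every minimal EMP with $j$ measured, not just the one that excites the sources and measures all other nodes.
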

\vspace{-4mm}
\begin{pf}
				We are going to focus on the first case, the second case follows from duality. 
				Assume, without loss of generality, that $a_{21}$ is the parameter of interest and let $3, 4, \dots, k$ be 
				the out-neighbors of node $2$ with node $1$ being its only in-neighbor. 
				Assume that all sources are excited, and all other nodes are measured.  
				Now, one can build the information matrix from the partial information matrices $M_{ji}$ as in (\ref{eq:Mji}) for $i \in \mathcal{F}$ and $j \in \mathcal{V}\setminus\mathcal{F}$. 
				We are interested in determining the first $k$ rows of the information matrix, which can be obtained as
				\begin{align*}
								M = &M_{21} + M_{31} + \cdots + M_{k1} + M_{k+1, 1} + \cdots + M_{n 1}  + \\&\sum_{j \in \mathcal{V}\setminus \mathcal{F}, i \in \mathcal{F}\setminus \{1\}} M_{ji}
				.\end{align*}
        The other sources will not make an effect on the first $k$ rows. This implies 
								$M[1, 1] = {\sigma_1}/{\lambda_2} + X$, where   
				$X$ is the sum of the corresponding elements from $M_{21}$, $M_{32}, \dots$. 
				Now, consider the permutation matrix
				\begin{align*}
								\rho := 
                \left[
								\begin{smallmatrix}
												1 & -{a_{32}}/{a_{21}} & -{a_{42}}/{a_{21}} & \cdots & -{a_{k,2}}/{a_{21}} & 0 & \cdots & 0 \\
												0 &  &  & & I \\
                    \end{smallmatrix} \right]
				.\end{align*}
        The covariance matrix can be obtained as
\[
P=\rho^\top\bar M^{-1}\rho\!=\!
\left[\begin{smallmatrix}
\lambda_2/\sigma_1&\\&\tilde S
\end{smallmatrix}\right],\
\bar M:=\rho M\rho^\top\!=\!
\left[\begin{smallmatrix}
\sigma_1/\lambda_2&0\\0&S
\end{smallmatrix}\right]. 
\tag*{$\blacksquare$}
\]
       \renewcommand{\qedsymbol}{}
\end{pf}
\vspace{-6mm}
This result shows that sources with a single outgoing edge, 
if its out-neighbor is measured, then the asymptotic covariance of the corresponding module will not depend on any parameter of the network. 
The same holds for sinks with a single incoming edge. 
As we have seen from the previous examples, for large magnitude of the network parameters, a most accurate estimate can be obtained by doing just the opposite, that is, by exciting the out-neighbor of a source and by measuring the in-neighbor of a sink. 
This is in line with the results for branches derived in 
\cite{MapurungaBazanellaOptimalAllocationExcitation2021, mapurunga-bazanella-optimal-excitation-2023}. 
Furthermore, this result explains the relationship obtained in equations (\ref{eq:a52sur}) and (\ref{eq:cova54I}) from the cross and inverted cross.

\section{All five-nodes trees}
\label{sec:21trees}

In the previous Section, we have provided Theorems that are valid for a specific class of trees, that we named crosses.
We have gained insight into the problem on whether we should excite or measure the internal nodes.  
In this Section we examine by means of various case studies to which extent these principles extend to other tree topologies.
Specifically, we test the insights for all possible tree networks with five nodes. 
The crosses in the previous sections are one particular 5-node tree topology, among 21 different tree topologies with 5 nodes that have at least one internal node.
Trees without internal nodes have only one minimal EMP (exciting the sources and measuring the sinks), therefore they are of no interest here. 
The edges from the 21 trees are depicted in Table \ref{tab:21trees}.
The theoretical information matrices will be calculated for all the resulting 21 trees, this allows to check the principles exhaustively for all 5-node tree topologies, and it is
what we'll present next.

\begin{table}[htpb]
        \centering
        \caption{All 21 five node trees with at least one internal node.}

				\label{tab:21trees}

{\tiny
\setlength{\tabcolsep}{1pt}
\renewcommand{\arraystretch}{1}

\resizebox{\columnwidth}{!}{%
\begin{tabular}{c|c|c|c|c|c}
\hline
Tree & edges & Tree & edges & Tree & edges\\ \hline

1 & $(1,2), (1,3), (1,4), (2,5)$ &
2 & $(1,2), (1,3), (2,4), (2,5)$ &
3 & $(1,2), (1,3), (2,4), (3,5)$ \\

4 & $(1,3), (1,4), (3,2), (2,5)$ &
5 & $(1,3), (1,2), (2,5), (4,5)$ &
6 & $(1,2), (2,3), (2,4), (2,5)$ \\

7 & $(1,2), (2,3), (2,4), (3,5)$ &
8 & $(1,2), (2,5), (5,3), (5,4)$ &
9 & $(1,3), (3,5), (5,4), (4,2)$ \\

10 & $(1,3), (1,4), (3,2), (5,4)$ &
11 & $(1,2), (2,3), (2,4), (5,4)$ &
12 & $(1,3), (3,5), (2,4), (5,4)$ \\

13 & $(1,2), (1,3), (2,5), (4,2)$ &
14 & $(1,2), (2,3), (5,3), (5,4)$ &
15 & $(1,3), (3,2), (2,5), (4,2)$ \\

16 & $(1,3), (3,5), (2,5), (4,2)$ &
17 & $(1,3), (3,2), (4,2), (5,2)$ &
18 & $(1,2), (2,3), (4,2), (2,5)$ \\

19 & $(1,2), (4,2), (2,3), (3,5)$ &
20 & $(1,2), (4,2), (2,5), (3,5)$ &
21 & $(2,1), (5,2), (3,2), (4,2)$ \\

\hline
\end{tabular}%
}
}
\end{table}

For each topology we analyze three scenarios: 
(1) uniform excitation with small random parameter values; 
(2) uniform excitation with large random parameter values;
(3) fully random - that is, random $\lambda_i$, $\sigma_j$ and $a_{ji}$.
%
In each scenario, 10,000 cases are randomly generated. 
For each case, we assess the precision of all minimal EMPs, determining what is the best EMP (i.e. the one resulting in the smallest trace).
Our interest is to determine how often the EMP generated by our principles - ``our EMP'' - is the best. 

To clarify, consider Tree 7, for which there are 4 minimal EMPs. 
We generated 10,000 fully random cases (scenario 3) and evaluate what is the best EMP in each case. 
Our EMP was the best in 7,333 cases ($73.3 \%$), which also makes it the champion - that is, there is no other EMP that is best in a larger number of cases. 
For the same topology, in scenario 1 our EMP is the best in ``only'' $49.9\%$ of cases,
but it is still the champion, as there is no other EMP that is best in a larger number of cases.

    Figure \ref{fig:radar} depicts the percentage of the best EMP, our EMP and the runner up EMP for each scenario and for the trees that we have specific suggestions. 
		In all cases, for all trees,``our EMP" is the champion with a large difference for the runner up EMP in most cases, except for trees 9 and 19.  
		These trees are the inverse of one another (that is, if we invert the edges of one we obtain the other).
		For these trees we have a conflicting suggestion. 
		By one hand we have a branch and we must excite nodes near the sources and measure nodes near the sinks,
		by another hand, we have to measure nodes with multiple incoming edges, and excite nodes with multiple outgoing edges. 
		The same holds for Tree 9, which is just a branch and we have no specific suggestion for the middle node.  
\begin{figure}[h!]
				\centering
				\includegraphics[width=1\columnwidth]{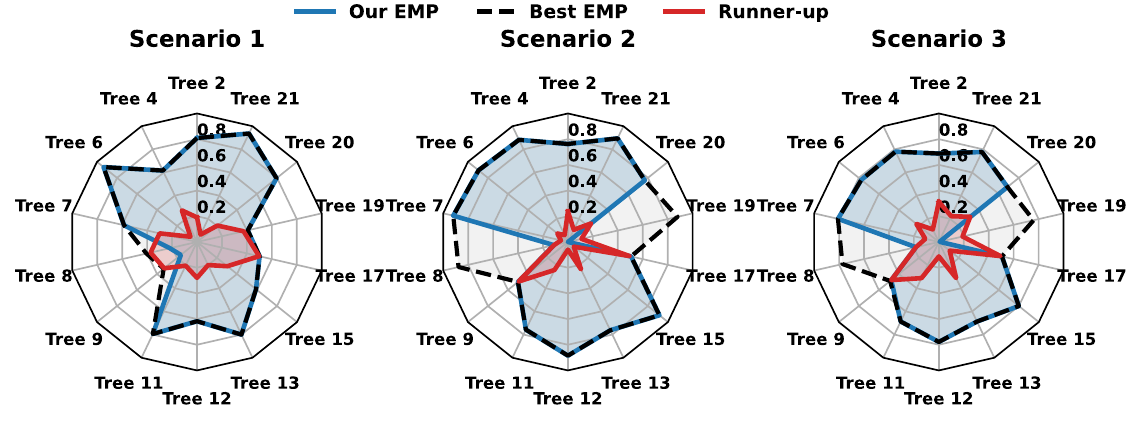}
				\caption{Percentage of the winning EMPs for the trees with suggestions from our insights.}
				\label{fig:radar}
\end{figure}
 
Considering all tree topologies and scenarios depicted in Figure \ref{fig:radar} (except for trees \#8, \#9 and \#19), we found that in average our EMP is the best in $75.45\%$ of cases and 
is the champion in all topologies and scenarios for which there is a specific suggestion.  
Equally important is to realize how relevant this is. 
The ratio between the precision of best EMP and the worst EMP is in average $1.14\cdot {10}^7$, and can be as large as $6.76 \cdot 10^{32}$.

\section{Conclusions}
\label{sec:conclusions}

In this work we have shown how to obtain the information matrix for any dynamic network through the concept of partial information matrix.
In addition, we have shown, through the analysis of the cross networks, that a principle to select EMPs that yield more accurate estimates for tree is to measure nodes with multiple excited in-neighbors and excite nodes with multiple measured out-neighbors. 
Moreover, we have provided a condition that extends the one from branches in which the accuracy of a particular module does not depend on the rest of the network, but only on the SNR applied at the adjacent nodes. 
This is of particular importance for the estimation of this module, once one can obtain more accurate estimates if the parameter has small magnitude.
In addition, we have provided a numerical analysis for 5 node trees under which we validated the principles obtained for the crosses. 
The relevance of these principles are illustrated by the large difference in terms of accuracy observed in the selection of the best EMP when compared to the runner up EMP.

\bibliography{treebib}

\appendix

\section{5 node trees}

Figure \ref{fig:alltrees5nodes} depicts the 21 trees used in the numerical analysis.

\begin{figure*}
	\includegraphics[width=\textwidth, height=\textheight]{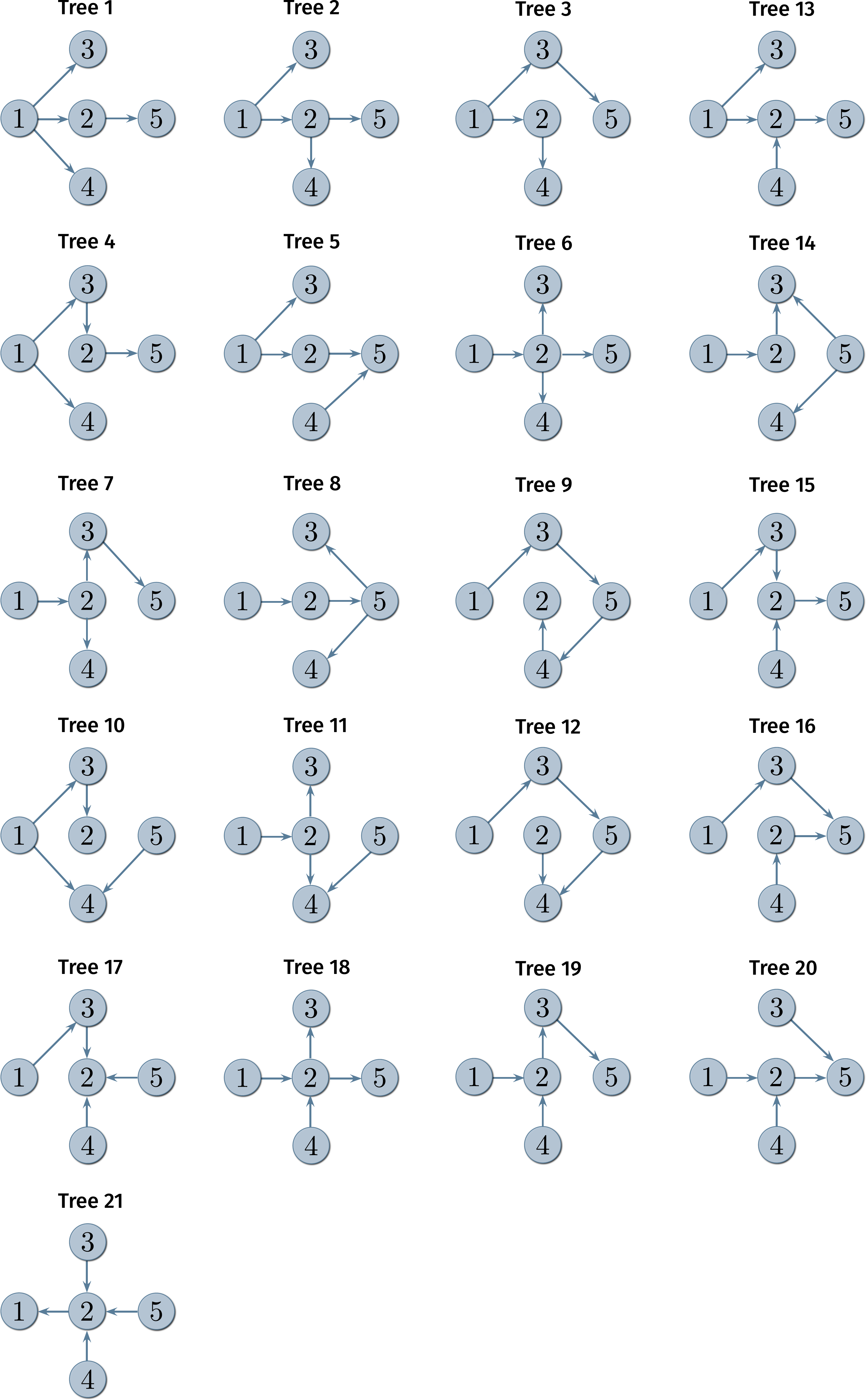}
  \caption{All different 5 node trees.}
	\label{fig:alltrees5nodes}
\end{figure*}

\end{document}